\newtheorem{theorem}{Theorem}[section]
\newtheorem{corollary}{Corollary}[theorem]
\newtheorem{lemma}{Lemma}[section]
\newtheorem{definition}{Definition}[section]
\newtheorem{proposition}{Proposition}[section]
\theoremstyle{remark}
\newtheorem{remark}{Remark}[section]
\DeclareMathOperator{\Pesssup}{\mathcal{P}-\,ess\,sup}
\definecolor{xdxdff}{rgb}{0.65882,0.65882,0.65882}
\definecolor{zzttqq}{rgb}{0.26667,0.26667,0.26667}
\definecolor{uququq}{rgb}{0.25098,0.25098,0.25098}
\definecolor{qqqqff}{rgb}{0.33333,0.33333,0.33333}
\begin{document}

\title{What risk measures are time consistent for all filtrations?}
\author{Samuel N. Cohen\\University of Adelaide\\samuel.cohen@adelaide.edu.au}

\maketitle
\begin{abstract}
We study coherent risk measures which are time-consistent for multiple filtrations. We show that a coherent risk measure is time-consistent for every filtration if and only if it is one of four main types. Furthermore, if the risk measure is strictly monotone it is linear, and if the reference probability space is not atomic then it is either linear or an essential supremum.\\
Keywords: Risk measure, time consistency, stability by pasting\\
MSC2000: 91B30, 91B16, 60G35
\end{abstract}

\section{Introduction}
The theory of coherent risk measures has received much attention in recent research. For a probability space $(\Omega, \mathcal{F},\mathbb{P}^0)$, where $\mathbb{P}^0$ is a reference probability measure, these are maps $\rho:L^\infty(\mathcal{F})\to\mathbb{R}$ satisfying assumptions of monotonicity, sublinearity, positive homogeneity, cash invariance and, typically, a Fatou property. An extensive discussion of these maps and their extensions can be found in \cite{Artzner1999}, \cite{Delbaen2002}, or \cite{Follmer2002}, amongst others.

One area of particular activity is the study of time-consistent risk measures, that is, where, for a given filtration $\{\mathcal{F}_t\}$ (we assume that $\mathcal{F}_0$ is trivial), one can find a family of maps $\rho_t:L^\infty (\mathcal{F})\to L^\infty(\mathcal{F}_t)$ such that $\rho_0\equiv \rho$, and a semigroup property $\rho_s(-\rho_t(X))=\rho_s(X)$ is satisfied for all $X$. In this note, we seek conditions such that a time consistent extension exists for all filtrations.

In \cite{Artzner1999} and \cite{Delbaen2002}, it is shown that a coherent risk measure with the Fatou property has the following representation.
\[\rho(X) = \max_{\mathbb{P}\in\mathcal{P}}\{E_{\mathbb{P}}[-X]\}\]
where $\mathcal{P}$ is a $L^1(\mathbb{P}^0)$-closed convex family of probability measures absolutely continuous with respect to the reference measure $\mathbb{P}^0$. Given this representation, we can consider these risk measures through their Radon-Nikodym derivatives $Z:=d\mathbb{P}/d\mathbb{P}^0$. We denote by $\mathcal{Z}$ the set of random variables obtained from $\mathcal{P}$ using this method.

For a classical (linear) expectation, a classical result establishes the existence of the conditional expectations $E[\cdot|\mathcal{F}_t]$ for any filtration $\{\mathcal{F}_t\}$. The induced dynamic risk measure $\rho_t(X) = -E[X|\mathcal{F}_t]$ is then time consistent for the filtration $\{\mathcal{F}_t\}$. We shall show that the existence of such an extension for every filtration implies that the risk measure must be one of four special types.

Such a result is of practical interest, as it raises issues of robustness of time-consistent nonlinear risk measures to errors in the modelling of the filtration. In particular, our result shows that, for every coherent risk measure which is not of these types, there will be at least one (simple) filtration with respect to which it is not time-consistent.

\section{A representation result}

\begin{theorem}[Pasting Property, Artzner et al. \cite{Artzner2007}]
For a given filtration $\{\mathcal{F}_t\}$, a coherent risk measure admits a time-consistent extension with respect to $\{\mathcal{F}_t\}$, (or, more simply, is $\{\mathcal{F}_t\}$-consistent) if and only if we have the `pasting property'
\[Z, Z'\in\mathcal{Z} \text{ implies }\quad \left( Z \cdot \frac{E[Z'|\mathcal{F}_t]}{E[Z|\mathcal{F}_t]}\right) \in\mathcal{Z} \text{ for all }t.\]
\end{theorem}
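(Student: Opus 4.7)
My plan is to treat the two implications separately. Throughout I will rely on the identity that, for $Z,Z'\in\mathcal{Z}$ with corresponding measures $\mathbb{P},\mathbb{P}'$, the random variable $Z'':=Z\cdot E[Z'|\mathcal{F}_t]/E[Z|\mathcal{F}_t]$ is the density (with respect to $\mathbb{P}^0$) of the measure $\mathbb{P}''$ uniquely determined by $\mathbb{P}''|_{\mathcal{F}_t}=\mathbb{P}|_{\mathcal{F}_t}$ and $E_{\mathbb{P}''}[X|\mathcal{F}_t]=E_{\mathbb{P}'}[X|\mathcal{F}_t]$ for all $X\in L^\infty(\mathcal{F})$; this is a short Bayes-rule check that I would dispatch first.

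\textbf{Sufficiency.} Assuming pasting, I would define the candidate extension $\rho_t(X):=\Pesssup_{\mathbb{P}\in\mathcal{P}}E_{\mathbb{P}}[-X|\mathcal{F}_t]$ and verify that the family $\{E_{\mathbb{P}}[-X|\mathcal{F}_t]:\mathbb{P}\in\mathcal{P}\}$ is upward directed (given $\mathbb{P}_1,\mathbb{P}_2\in\mathcal{P}$, pick the one achieving the larger conditional expectation on the set where they differ and paste it with an arbitrary reference element of $\mathcal{P}$; pasting keeps us inside $\mathcal{Z}$). Directedness yields a sequence $\mathbb{P}'_n\in\mathcal{P}$ with $E_{\mathbb{P}'_n}[-X|\mathcal{F}_t]\uparrow\rho_t(X)$. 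For $s\le t$ and any $\mathbb{P}\in\mathcal{P}$, form the pasted measure $\mathbb{P}''_n$ at time $t$; by pasting it lies in $\mathcal{P}$, and the tower property gives
\[
E_{\mathbb{P}''_n}[-X|\mathcal{F}_s]=E_{\mathbb{P}}[\,E_{\mathbb{P}'_n}[-X|\mathcal{F}_t]\,|\,\mathcal{F}_s\,].
\]
Passing to the limit by monotone convergence and taking the essential supremum over $\mathbb{P}$ delivers $\rho_s(-\rho_t(X))\ge\rho_s(X)$; the reverse inequality follows from the trivial bound $E_{\mathbb{P}}[-X|\mathcal{F}_t]\le\rho_t(X)$ together with the tower property. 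This gives the semigroup identity, hence time consistency.

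\textbf{Necessity.} Conversely, suppose a time-consistent family $\{\rho_t\}$ exists. Fix $Z,Z'\in\mathcal{Z}$ and let $\mathbb{P}''$ be the pasted measure as above. For any $X\in L^\infty(\mathcal{F})$ the tower property and the pointwise inequality $E_{\mathbb{P}'}[-X|\mathcal{F}_t]\le\rho_t(X)$ yield
\[
E_{\mathbb{P}''}[-X]=E_{\mathbb{P}}\!\left[E_{\mathbb{P}'}[-X|\mathcal{F}_t]\right]\le E_{\mathbb{P}}[\rho_t(X)]\le\rho(-\rho_t(X))=\rho(X),
\]
where the last step is time consistency at $s=0$. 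Hence the continuous linear functional $X\mapsto E_{\mathbb{P}''}[-X]$ is dominated by $\rho$ on all of $L^\infty$; since $\mathcal{P}$ is exactly the $L^1(\mathbb{P}^0)$-closed convex set of such dominating measures in the dual representation, a bipolar/Hahn--Banach argument places $\mathbb{P}''\in\mathcal{P}$ and therefore $Z''\in\mathcal{Z}$.

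\textbf{Anticipated obstacle.} The calculations are largely bookkeeping; the only delicate point is the directedness/essential-supremum manipulation in the sufficiency step, which is needed to interchange the supremum over $\mathcal{P}$ with an outer expectation and thereby reach the semigroup identity in the almost-sure sense. Once directedness is established via pasting, the rest reduces to standard monotone convergence and tower-property arguments.
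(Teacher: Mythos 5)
You should first note that the paper does not prove this theorem at all: it is quoted from Artzner et al.\ and used as a black box, so the comparison below is against the standard argument (Artzner--Delbaen--Eber--Heath--Ku, and Delbaen's work on m-stable sets) rather than against anything in this paper. Your overall architecture --- sufficiency via the candidate extension $\rho_t(X)=\esssup_{\mathbb{P}\in\mathcal{P}}E_{\mathbb{P}}[-X|\mathcal{F}_t]$ plus directedness and monotone convergence, necessity via the tower property plus the fact that an $L^1$-closed convex representing set is the maximal dominated set --- is the right one. Two slips are correctable but should be fixed. First, your Bayes identification is backwards: $Z''=Z\cdot E[Z'|\mathcal{F}_t]/E[Z|\mathcal{F}_t]$ satisfies $E[Z''|\mathcal{F}_t]=E[Z'|\mathcal{F}_t]$, so $\mathbb{P}''$ agrees with $\mathbb{P}'$ (not $\mathbb{P}$) on $\mathcal{F}_t$ and has the $\mathcal{F}_t$-conditional expectations of $\mathbb{P}$ (not $\mathbb{P}'$); since the property is quantified over all ordered pairs this only forces a relabelling, but the displayed identity $E_{\mathbb{P}''}[-X]=E_{\mathbb{P}}[E_{\mathbb{P}'}[-X|\mathcal{F}_t]]$ is false as written and should read $E_{\mathbb{P}'}[E_{\mathbb{P}}[-X|\mathcal{F}_t]]$. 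Second, in the sufficiency step you attribute the inequalities to the wrong arguments: the trivial bound $E_{\mathbb{P}}[-X|\mathcal{F}_t]\le\rho_t(X)$ with the tower property gives $\rho_s(-\rho_t(X))\ge\rho_s(X)$, while the pasting/directedness/monotone-convergence argument is what yields $\rho_s(-\rho_t(X))\le\rho_s(X)$.

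There are two genuine gaps. (a) Directedness does not follow from the stated pasting formula in the way you sketch. Upward directedness of $\{E_{\mathbb{P}}[-X|\mathcal{F}_t]\}$ requires that for $A=\{E_{\mathbb{P}_1}[-X|\mathcal{F}_t]\ge E_{\mathbb{P}_2}[-X|\mathcal{F}_t]\}\in\mathcal{F}_t$ the $\mathcal{F}_t$-indicator mixture $I_A Z_1\alpha+I_{A^c}Z_2\beta$ (with $\mathcal{F}_t$-measurable normalisers) again lies in $\mathcal{Z}$. That is the stopping-time (or ``fork-convex'') form of stability; it is not an instance of $Z\cdot E[Z'|\mathcal{F}_t]/E[Z|\mathcal{F}_t]$, which only rescales the $\mathcal{F}_t$-marginal of a single $Z$ while keeping its conditional structure on every $\mathcal{F}_t$-atom. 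Your phrase ``pick the one achieving the larger conditional expectation on the set where they differ and paste it with an arbitrary reference element'' describes exactly such an indicator paste, and you would need either to derive indicator pasting from the deterministic-time version (together with convexity and closedness of $\mathcal{Z}$ --- this is not obvious and I do not believe it is automatic), or to read the theorem's ``for all $t$'' as ranging over stopping times, as the cited reference effectively does. (b) In the necessity direction, the ``pointwise inequality $E_{\mathbb{P}'}[-X|\mathcal{F}_t]\le\rho_t(X)$'' is not a trivial bound: nothing in the paper's definition of a time-consistent extension ($\rho_0=\rho$ and $\rho_s(-\rho_t(X))=\rho_s(X)$) supplies it. It amounts to a conditional dual representation $\rho_t(X)=\esssup_{\mathbb{Q}}E_{\mathbb{Q}}[-X|\mathcal{F}_t]$ over a set of measures containing $\mathcal{P}$, which must be extracted from the consistency relation and the (here unstated) axioms of conditional coherence, locality and conditional cash-invariance for $\rho_t$. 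The final Hahn--Banach step identifying the closed convex representing set with the maximal dominated set is fine.
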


This property is the primary tool we shall use to obtain our result.

\begin{definition}
We shall say that a filtration is simple if it is of the form 
\[\mathcal{F}_s = \begin{cases}
 \{\emptyset, \Omega\} &s< t\\
 \{\emptyset, A, A^c, \Omega\}=\sigma(A)& t\leq s < t'\\
 \mathcal{F} & t' \leq s
 \end{cases}\]
 for some times $t, t'\in ]0,T]$ and some measurable set $A\in\mathcal{F}$. We denote such a filtration $\mathcal{F}^A$.
 \end{definition}
 
\begin{definition}
If a set $A\in\mathcal{F}$ has $\mathbb{P}(A)=0$ for all $\mathbb{P}\in\mathcal{P}$, then it is called \emph{polar}. For a random variable $X$, $\Pesssup(X)$ is the least constant $c$ such that $\{\omega:c<X(\omega)\}$ is polar.
\end{definition}

\begin{lemma} \label{lem:mainlem}
Let $\rho:L^\infty\to\mathbb{R}$ be a coherent risk measure with the Fatou property, and assume $\rho$ admits a time-consistent extension with respect to two simple filtrations $\mathcal{F}^A,\mathcal{F}^B$, where $A\cap B = \emptyset$. Let $\mathcal{P}$ be a representative closed set of test measures for $\rho$. Then at least one of the following holds.
\begin{enumerate}[(i)]
\item At least one of $A, B, (A\cup B)^c$ is polar.
\item $\max_{\mathbb{P}\in\mathcal{P}} \mathbb{P}(A)=1$ and $\min _{\mathbb{P}\in\mathcal{P}} \mathbb{P}(A)=0$, and similarly for $B$ and $(A\cup B)^c$
\item $\mathbb{P}(A), \mathbb{P}(B), \mathbb{P}(B|A^c)$ and $\mathbb{P}(A|B^c)$ are fixed for $\mathbb{P}\in\mathcal{P}$.
\end{enumerate}
\end{lemma}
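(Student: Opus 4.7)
The plan is to reduce the pasting property of Theorem~2.1 to a concrete family of transformations on the marginals of measures in $\mathcal{P}$ over the partition $\{A, B, C\}$, where $C = (A\cup B)^c$, and then classify the set of such marginals by a short case analysis. I first compute that the pasting at $\mathcal{F}^A$ sends the densities of $\mathbb{P}, \mathbb{P}'\in\mathcal{P}$ to the density of $\mathbb{P}'(A)\mathbb{P}(\cdot|A) + \mathbb{P}'(A^c)\mathbb{P}(\cdot|A^c)$. Write $(\alpha,\beta,\gamma) = (\mathbb{P}(A), \mathbb{P}(B), \mathbb{P}(C))$ and let $T\subseteq\Delta$ be the image of $\mathcal{P}$ in the $2$-simplex under this marginal map. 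Pasting is then captured by two operations: $O_A$ preserves $\beta/\gamma$ (equivalently $\mathbb{P}(B|A^c)$) and resets $\alpha$ to any value in $I_A := \{\mathbb{P}(A) : \mathbb{P}\in\mathcal{P}\}$; $O_B$ preserves $\alpha/\gamma$ and resets $\beta$ to any value in $I_B$. Because $\mathcal{P}$ is convex and the map is linear, $T$ is convex.

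If $T$ is a singleton, $\mathbb{P}(A), \mathbb{P}(B)$, and hence also $\mathbb{P}(B|A^c), \mathbb{P}(A|B^c)$, are fixed, and (iii) holds. If one of $A, B, C$ is polar, (i) holds. So assume neither, and aim for case (ii), which in view of convexity of $T$ is equivalent to $T = \Delta$. By swapping $A$ and $B$ if necessary I may assume $I_A$ is nondegenerate; were $I_B$ a singleton $\{\beta_0\}$, applying $O_A$ would force $\beta_0=0$ (making $B$ polar) or force $I_A$ to be trivial, so $I_B$ is nondegenerate too.

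I then identify the extremal points of $T$. Any point attaining $\gamma_{\max}$ must equal $(\alpha_{\min},\beta_{\min})$, since otherwise $O_A$ (resp.\ $O_B$) would be able to strictly increase $\gamma$. Applying $O_A$ with $\alpha'=\alpha_{\max}$ to $(\alpha_{\min},\beta_{\min})$ produces a point whose $\beta$-coordinate is $(1-\alpha_{\max})\beta_{\min}/(1-\alpha_{\min})$, which is strictly less than $\beta_{\min}$ unless $\beta_{\min}=0$; the symmetric argument via $O_B$ gives $\alpha_{\min}=0$, so $V_C\in T$. The same type of reasoning shows that any $\alpha_{\max}$-achieving point must be $(\alpha_{\max},0)$. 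Applying $O_B$ to this point with $\beta^*=\beta_{\max}$ produces $((1-\beta_{\max})\alpha_{\max},\beta_{\max})\in T$; but every $\beta_{\max}$-achieving point with $\beta_{\max}<1$ must have $\alpha=\alpha_{\min}=0$, and $\alpha_{\max}>0$ because $A$ is not polar, so $\beta_{\max}=1$ and $V_B\in T$. The symmetric argument gives $V_A\in T$, and convexity yields $T = \Delta$.

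The principal difficulty is the bookkeeping in the final stage: $O_A$ and $O_B$ preserve different invariants, and to deduce that every vertex of $\Delta$ lies in $T$ one must combine successive applications of both while keeping track of how the extremes of $I_A$, $I_B$ and the induced $I_C$ interact. A minor technical concern is that the pasting formula contains ratios that are ill-defined when $\mathbb{P}(A)\in\{0,1\}$; this needs to be handled, but every pasting step invoked in the proof can be performed at a measure for which the relevant conditional probabilities are strictly positive.
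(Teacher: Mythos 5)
Your argument is, at its core, the same as the paper's: you formalize the probability-simplex picture that the paper presents informally in the remark following the lemma, using exactly the two pasting operations (preserve $\mathbb{P}(B|A^c)$ while resetting $\mathbb{P}(A)$ within $I_A$; preserve $\mathbb{P}(A|B^c)$ while resetting $\mathbb{P}(B)$ within $I_B$), and your computation of the pasted marginal is correct. The paper's formal proof instead fixes a minimizer of $\mathbb{P}(A)$ and derives a contradiction via an explicit chain $Z^1,\dots,Z^6$, leaving the remaining max/min claims ``to the reader''; your extreme-point bookkeeping covers all six claims uniformly, which is a real gain in completeness. One step, however, does not survive your own closing caveat: to produce $((1-\beta_{\max})\alpha_{\max},\beta_{\max})$ you apply $O_B$ at the point $(\alpha_{\max},0,1-\alpha_{\max})$, i.e.\ at a measure with $\mathbb{P}(B)=0$, where the pasting density $Z\,E[Z'|\sigma(B)]/E[Z|\sigma(B)]$ is $0/0$ on $B$ and the identity $\mathbb{P}''=\mathbb{P}'(B)\,\mathbb{P}(\cdot|B)+\mathbb{P}'(B^c)\,\mathbb{P}(\cdot|B^c)$ breaks down; the same degeneracy can occur when you apply $O_A$ at $(\alpha_{\min},\beta_{\min})$ in the event that $\alpha_{\min}=0$. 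So the assertion that ``every pasting step invoked in the proof can be performed at a measure for which the relevant conditional probabilities are strictly positive'' is false as the proof stands. The gap is repairable using the convexity of $T$ that you already established: apply $O_B$ at the perturbed point $(1-\epsilon)(\alpha_{\max},0)+\epsilon(\alpha^*,\beta_{\max})$, whose second coordinate is strictly positive, and let $\epsilon\downarrow 0$; you should either carry out this limiting argument explicitly or state the convention under which pasting at a degenerate conditional is defined. (The paper's own construction of $Z^6$ has the same blemish when $\mathbb{P}^2(B|A^c)=0$, so this is a defect of the source as much as of your write-up.)
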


\begin{proof}
If we are in cases (i) or (iii) we are done.

As $A$ and $B$ are disjoint, it is easy to show
\[\mathbb{P}(A) = \frac{\mathbb{P}(A|B^c)[1-\mathbb{P}(B|A^c)]}{1-\mathbb{P}(A|B^c)\mathbb{P}(B|A^c)},\]
and similarly for $B$. Consequently, as we are not in case (iii), at least one of $\mathbb{P}(B|A^c)$ and $\mathbb{P}(A|B^c)$ must vary. Without loss of generality, we shall assume $\mathbb{P}(B|A^c)$ may vary.

Now suppose that $\mathbb{P}(A)>0$ for all $\mathbb{P}\in\mathcal{P}$. It follows that $\mathbb{P}(B^c)>0, \mathbb{P}((A\cup B)^c)<1$ for all $\mathbb{P}\in\mathcal{P}$.  Let $Z^1\in \mathcal{Z}$ correspond to a measure $\mathbb{P}^1$ which minimises $\mathbb{P}^1(A)$, this exists by the closedness of $\mathcal{P}$. Let $Z^2$, $Z^3$ correspond to measures with $\mathbb{P}^2(B|A^c)<\mathbb{P}^3(B|A^c)$. Hence define, 
\[Z^4 = \frac{E[Z^1|\sigma(A)]}{E[Z^2|\sigma(A)]} Z^2.\]
Note that 
\[\mathbb{P}^4(A)=E[I_A Z^4] = E[I_A Z^1]=\mathbb{P}^1(A)\]
and 
\[\mathbb{P}^4(B|A^c)=\frac{E[I_B Z^4|A^c]}{E[Z^4|A^c]} =\frac{E[I_B Z^2|A^c]}{E[Z^2|A^c]}=\mathbb{P}^2(B|A^c).\]
Similarly define 
\[Z^5 = \frac{E[Z^1|\sigma(A)]}{E[Z^3|\sigma(A)]} Z^3.\]
As $\rho$ is time-consistent for $\mathcal{F}^A$, we know that $Z^4, Z^5\in\mathcal{Z}$. By construction, we also have
$\mathbb{P}^4(B|A^c)< \mathbb{P}^5(B|A^c)$ and $\mathbb{P}^4(A)=\mathbb{P}^5(A)$, hence $\mathbb{P}^4(B)< \mathbb{P}^5(B)$.

Finally, let
\[Z^6 = \frac{E[Z^5|\sigma(B)]}{E[Z^4|\sigma(B)]} Z^4 \in\mathcal{Z}.\]

It follows that $\mathbb{P}^6(B) = \mathbb{P}^5(B) > \mathbb{P}^4(B)$, and that 
\[\mathbb{P}^6(A|B^c) = \frac{E[I_A Z^6|B^c]}{E[Z^6|B^c]} = \frac{E[I_A Z^4|B^c]}{E[Z^4|B^c]} = \mathbb{P}^4(A|B^c).\]
Therefore, 
\[\mathbb{P}^6(A) = \mathbb{P}^6(A|B^c)\cdot \mathbb{P}^6(B^c) < \mathbb{P}^4(A|B^c)\cdot\mathbb{P}^4(B^c) = \mathbb{P}^4(A) = \mathbb{P}^1(A) \]
which contradicts the minimality of $\mathbb{P}^1(A)$. Therefore $\min_{\mathbb{P}\in\mathcal{P}} \mathbb{P}(A)=0$. 

In a similar way we can show that $\max_{\mathbb{P}\in\mathcal{P}} \mathbb{P}(A)=1$, and the required results for $\mathbb{P}(B)$. The construction is apparent from the geometrical argument below, the details are left to the reader.
\end{proof}

\begin{remark}
An graphical representation of this result is the following modification of an argument of Epstein and Schneider \cite{Epstein2003}.  In \cite{Epstein2003} the state space is finite. We do not make this assumption, however, as we are considering simple filtrations, the interim $\sigma$-algebras are finite, and their method can be appropriately adapted.

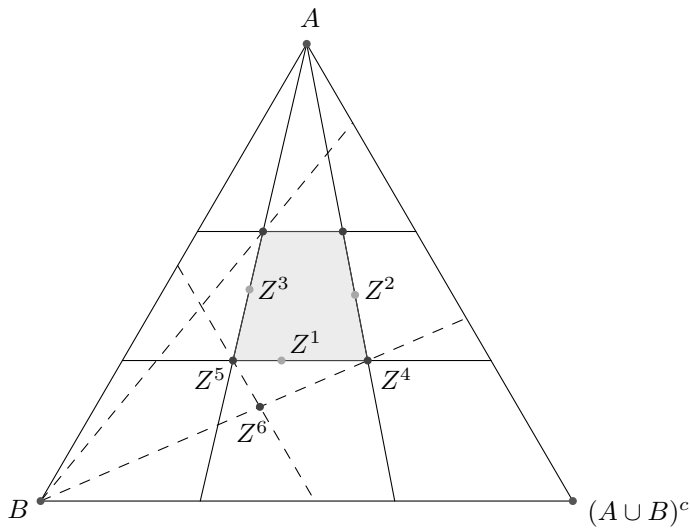
\begin{figure} \label{fig1}
\begin{tikzpicture}[line cap=round,line join=round,>=triangle 45,x=0.7cm,y=0.7cm]
\clip(-8.30071,-2.86262) rectangle (8.13123,10.34737);
\fill[color=zzttqq,fill=zzttqq,fill opacity=0.1] (-0.82081,5.10603) -- (0.67889,5.10603) -- (1.1458,2.66161) -- (-1.38533,2.66161) -- cycle;
\draw (0,8.66025)-- (5,0);
\draw (5,0)-- (-5,0);
\draw (-5,0)-- (0,8.66025);
\draw (0,8.66025)-- (-2,0);
\draw (0,8.66025)-- (1.65419,0);
\draw (-2.05203,5.10603)-- (2.05203,5.10603);
\draw (-3.46332,2.66161)-- (3.46332,2.66161);
\draw [color=zzttqq] (-0.82081,5.10603)-- (0.67889,5.10603);
\draw [color=zzttqq] (0.67889,5.10603)-- (1.1458,2.66161);
\draw [color=zzttqq] (1.1458,2.66161)-- (-1.38533,2.66161);
\draw [color=zzttqq] (-1.38533,2.66161)-- (-0.82081,5.10603);
\draw [dash pattern=on 4pt off 4pt] (-5,0)-- (0.86375,7.16419);
\draw [dash pattern=on 4pt off 4pt] (-5,0)-- (2.99976,3.46452);
\draw [dash pattern=on 4pt off 4pt] (-2.42432,4.4612)-- (0.15135,0);
\draw (-0.5002,3.4) node[anchor=north west] {$Z^1$};
\draw (0.9,4.4122) node[anchor=north west] {$Z^2$};
\draw (-1.1,4.4122) node[anchor=north west] {$Z^3$};
\draw (1.2,2.7) node[anchor=north west] {$Z^4$};
\draw (-2.3,2.7) node[anchor=north west] {$Z^5$};
\draw (-1.5,1.7) node[anchor=north west] {$Z^6$};
\draw (-0.3,9.5) node[anchor=north west] {$A$};
\draw (-5.8,0.2) node[anchor=north west] {$B$};
\draw (5.1,0.2) node[anchor=north west] {$(A\cup B)^c$};
\fill [color=qqqqff] (0,8.66025) circle (1.5pt);
\fill [color=qqqqff] (-5,0) circle (1.5pt);
\fill [color=qqqqff] (5,0) circle (1.5pt);
\fill [color=uququq] (-0.82081,5.10603) circle (1.5pt);
\fill [color=uququq] (0.67889,5.10603) circle (1.5pt);
\fill [color=uququq] (1.1458,2.66161) circle (1.5pt);
\fill [color=uququq] (-1.38533,2.66161) circle (1.5pt);
\fill [color=uququq] (-0.87904,1.7847) circle (1.5pt);
\fill [color=xdxdff] (-1.07453,4.00738) circle (1.5pt);
\fill [color=xdxdff] (-0.47371,2.66161) circle (1.5pt);
\fill [color=xdxdff] (0.9082,3.90551) circle (1.5pt);
\end{tikzpicture}
	\caption{A probability simplex}
\end{figure}

For measurable non-intersecting sets $A$, $B$, $(A\cup B)^c$, we represent the probabilities of $A, B$ and $(A\cup B)^c$ in a probability simplex (as in Figure \ref {fig1}). Here horizontal lines represent measures with constant probability $\mathbb{P}(A)$, and rays extending from $A$ represent measures with constant probability $\mathbb{P}(B|A^c)$. Similarly from the perspective of $B$.

Epstein and Schneider \cite{Epstein2003} show that the risk measure under consideration is $\mathcal{F}^A$-consistent if and only if the set of probability measures forms a `$\mathcal{F}^A$-rectangle', that is, if its projection on this simplex is the quadrilateral enclosed by two horizontal lines (measures of constant $\mathbb{P}(A)$), and two rays extending from $A$ (measures of constant $\mathbb{P}(B|A^c)$). It is then clear that a closed set of measures absolutely continuous with respect to $\mathbb{P}^0$ cannot be both a $\mathcal{F}^A$-rectangle and a $\mathcal{F}^B$-rectangle, unless it lies entirely on one of the sides of the triangle (case (i)), is the whole triangle (case (ii)) or is a single point (case (iii)). 

Our formal argument can be diagramatically represented by taking a measure $Z^1$ with minimal probability for $A$, and measures $Z^2, Z^3$ with differing conditional probabilities $\mathbb{P}(B|A^c)$. Using $\mathcal{F}^A$-consistency, we constructed measures $Z^4, Z^5$, with minimal $\mathbb{P}(A)$ and differing $\mathbb{P}(B|A^c)$. As shown in Figure \ref{fig1}, these measures form the bottom of an $\mathcal{F}^A$-rectangle. By $\mathcal{F}^B$-consistency, we then constructed a measure $Z^6$ with the same $\mathbb{P}(B)$ as $Z^5$ and the same $\mathbb{P}(A|B^c)$ as $Z^4$. By the pasting property, $Z^6\in\mathcal{Z}$, but we then show that $Z^6$ has a lower probability of $A$ than $Z^1$, and so have obtained a contradiction. The required construction to show that $\max_{\mathbb{P}\in\mathcal{P}}\mathbb{P}(A)=1$ is then also clear (construct measures on the top of an $\mathcal{F}^A$-rectangle, then find a measure above them using $\mathcal{F}^B$-consistency).
\end{remark}

Using Lemma \ref{lem:mainlem}, we obtain the following general result.

\begin{theorem}\label{thm:mainthm}
Let $\rho$ be a coherent risk measure having the Fatou property with a time-consistent extension for any (simple) filtration of the probability space $(\Omega, \mathcal{F}, \mathbb{P}^0)$. Then one of the following holds:
\begin{enumerate}
\item \emph{($\rho$ is 1-atomic)} There exists an $\omega_1\in\Omega$ such that $\rho(X) = -X(\omega_1)$.
\item \emph{($\rho$ is 2-atomic)} There exist $\omega_1,\omega_2\in\Omega$ and an interval $[\alpha,\beta]\subsetneq[0,1]$ such that
\[\rho(X) = \max_{p\in[\alpha,\beta]}\{-pX(\omega_1)-(1-p)X(\omega_2)\}.\]
\item \emph{($\rho$ is extremal)} $\rho(X) = \Pesssup\{-X(\omega)\}$.
\item \emph{($\rho$ is linear)} There exists a probability measure $\mathbb{P}^1$ absolutely continuous with respect to $\mathbb{P}^0$ such that $\rho(X) = -E_{\mathbb{P}^1}[X]$, (and $\mathbb{P}^1$ does not assign all its mass to a single atom).
\end{enumerate}
\end{theorem}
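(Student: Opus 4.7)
The plan is to apply Lemma \ref{lem:mainlem} systematically, classifying $\rho$ by the atomic structure of the polar $\sigma$-ideal $\mathcal{N}=\{N\in\mathcal{F}:N\text{ polar}\}$. Call a non-polar $A\in\mathcal{F}$ a \emph{$\mathcal{P}$-atom} if every measurable $B\subseteq A$ is polar or has polar complement in $A$. I would split on the maximal cardinality of a pairwise disjoint family of non-polar sets.

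First, suppose $\Omega$ admits at most two disjoint non-polar sets. If $\Omega$ itself is a $\mathcal{P}$-atom, then every $\mathbb{P}\in\mathcal{P}$ takes only the values $0$ and $1$ on $\mathcal{F}$, so $\mathcal{P}=\{\mathbb{P}^1\}$ where $\mathbb{P}^1=\delta_{\omega_1}$ for some $\mathbb{P}^0$-atom $\omega_1\in\Omega$, yielding case~1. If $\Omega$ decomposes (mod polar) into two $\mathcal{P}$-atoms $A_1,A_2$, the same $\{0,1\}$-valued argument applied inside each $A_i$ forces the conditional of every $\mathbb{P}\in\mathcal{P}$ on $A_i$ to be a Dirac at a fixed $\mathbb{P}^0$-atom $\omega_i\in A_i$ (otherwise $A_i$ could be split non-polarly). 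Hence $\mathcal{P}=\{p\delta_{\omega_1}+(1-p)\delta_{\omega_2}:p\in[\alpha,\beta]\}$ for some $[\alpha,\beta]\subseteq[0,1]$, landing in case~2 (when $\alpha<\beta$ and $[\alpha,\beta]\subsetneq[0,1]$), case~3 (when $[\alpha,\beta]=[0,1]$), or case~4 (when $\alpha=\beta$).

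Otherwise $\Omega$ admits three pairwise disjoint non-polar sets; after pooling any polar remainder we may assume they partition $\Omega$. Every partition $(A,B,(A\cup B)^c)$ into three non-polar sets then falls under case~(ii) or case~(iii) of Lemma \ref{lem:mainlem}. If (iii) holds for every such triple, fix a reference partition $(A_1,A_2,A_3)$; for any non-polar $B$ and each non-polar piece $A_i\cap B$, applying the lemma to $(A_i\cap B,A_j,A_k\cup(A_i\setminus B))$ (which has three non-polar pieces) again gives (iii), so $\mathbb{P}(A_i\cap B)$ is fixed across $\mathcal{P}$; summing shows $\mathbb{P}(B)$ is fixed for every $B\in\mathcal{F}$, whence $\mathcal{P}$ is a singleton and $\rho$ is linear (case~4). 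If (ii) holds for some triple, I aim for case~3, $\rho=\Pesssup$; it suffices to show that $\sup_{\mathbb{P}\in\mathcal{P}}\mathbb{P}(B)=1$ for every non-polar $B$, since then approximation by simple functions together with the $L^1(\mathbb{P}^0)$-closedness of $\mathcal{P}$ yields the extremal representation.

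The main obstacle is this final propagation step. The strategy is, given $B$, to choose two disjoint non-polar subsets $C_1,C_2\subseteq B^c$ and apply Lemma \ref{lem:mainlem} to $(C_1,C_2,B)$: alternative (iii) would fix $\mathbb{P}(C_1)$ and $\mathbb{P}(C_2)$, which, combined with the non-trivial variation of $\mathbb{P}(A_j)$ coming from the original triple via a few uses of the pasting property, leads to a contradiction; hence we are in (ii) and obtain a measure with $\mathbb{P}(B)=1$. The subtle point is handling $B$ for which $B^c$ cannot be split into two non-polar pieces (that is, $B^c$ is a $\mathcal{P}$-atom); one must argue that such an atomic residue is incompatible with the presence of case~(ii) for $(A_1,A_2,A_3)$ in this non-atomic branch, ruling it out.
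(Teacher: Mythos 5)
Your architecture --- classify $\Omega$ by its $\mathcal{P}$-atoms and then feed triples of non-polar sets into Lemma \ref{lem:mainlem} --- is a genuinely different organisation from the paper's, which classifies each finite partition as 1-atomic/2-atomic/fixed/extremal, proves that this class is stable under refinement, and then passes to a limit. Your 1-atomic, 2-atomic and ``(iii) for every triple'' branches are essentially sound (modulo the routine check that a $\{0,1\}$-valued measure absolutely continuous with respect to $\mathbb{P}^0$ must charge a single $\mathbb{P}^0$-atom, which the paper handles with its shrinking partitions $\mathcal{A}^n$). The gap is in the extremal branch, exactly where you flag it, and both of your proposed repairs fail as stated.

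First, ruling out alternative (iii) for $(C_1,C_2,B)$ is not a matter of ``fixed $\mathbb{P}(C_i)$ versus varying $\mathbb{P}(A_j)$'': those are compatible (all of $\mathbb{P}(C_1),\mathbb{P}(C_2),\mathbb{P}(B)$ can be fixed at $1/3$ while $\mathbb{P}(A_1)$ sweeps $[0,1]$), so no contradiction falls out of the bare statements. A real argument is needed, e.g.: paste a measure with $\mathbb{P}'(A_1)=1$ onto an arbitrary $Z$ over $\sigma(A_1)$ to get $\mathbb{P}(B\mid A_1)=b$ for every $\mathbb{P}$, hence $\mathbb{P}(B\cap A_1)=b\,\mathbb{P}(A_1)$ with $0<b<1$, and then apply the Lemma to the fresh triple $(B\cap A_1,\,B^c\cap A_1,\,A_1^c)$, where alternative (iii) contradicts the variation of $\mathbb{P}(A_1)$ and alternative (ii) contradicts $\mathbb{P}(B\cap A_1)\le b<1$. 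None of this is in your sketch. Second, and more seriously, your resolution of the case where $B^c$ is a $\mathcal{P}$-atom rests on a false claim: such $B$ are \emph{not} incompatible with alternative (ii). Take $\mathbb{P}^0=\tfrac12\delta_{0}+\tfrac12(\text{uniform on }[1,2])$ and $\mathcal{P}$ the set of all probability measures absolutely continuous with respect to $\mathbb{P}^0$; this $\rho$ is extremal and consistent for every filtration, yet $B=[1,2]$ is non-polar while $B^c=\{0\}$ is a $\mathcal{P}$-atom. The correct move there is to split $B$ rather than $B^c$ (if both were $\mathcal{P}$-atoms you would be back in your two-atom branch) and apply the Lemma to $(D_1,D_2,B^c)$, whose case (ii) gives $\max_{\mathbb{P}}\mathbb{P}(B)\ge\max_{\mathbb{P}}\mathbb{P}(D_1)=1$. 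The paper avoids this entire case analysis with its refinement-stability step --- one unsplit extremal cell survives each elementary refinement and forces the refined partition to be extremal --- followed by simple-function approximation and the Fatou property; you would need either to import that mechanism or to write out the pasting argument above in full.
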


\begin{proof}
Let $\mathcal{A}=\{A_i\}\subset\mathcal{F}$ be a finite partition of $\Omega$. We separate this partition into two sets, the polar sets $\mathcal{N}=\{A_i\in\mathcal{A}: \mathbb{P}(A_i)=0 \text{ for all } \mathbb{P}\in\mathcal{P}\}$, and the non-polar sets $\mathcal{B}=\mathcal{A}\setminus \mathcal{N}$. Clearly $\mathcal{B}$ is nonempty.

As $\rho$ is time-consistent for all simple filtrations, it is $\mathcal{F}^{A_i}$-consistent for all $i$. Hence, by Lemma \ref{lem:mainlem}, considering an arbitrary pair of sets in $\mathcal{A}$ we see that we can classify $\mathcal{B}$ as follows:
\begin{enumerate}
\item \emph{($\mathcal{B}$ is 1-atomic)} $\mathcal{B}$ contains one element, (which hence satisfies $\mathbb{P}(B)=1$ for all $\mathbb{P}\in\mathcal{P})$,
\item \emph{($\mathcal{B}$ is extremal)} for all $B_i\in\mathcal{B}$ there exist measures $\mathbb{P}, \mathbb{P}'\in\mathcal{P}$ such that $\mathbb{P}(B_i) = 0$ and $\mathbb{P}'(B_i)=1$,
\item \emph{($\mathcal{B}$ is fixed)} $\mathcal{B}$ contains at least two elements, and for all $B_i\in\mathcal{B}$, $\mathbb{P}(B_i)$ is the same for all $\mathbb{P}\in\mathcal{P}$, or
\item \emph{($\mathcal{B}$ is 2-atomic)} $\mathcal{B}$ contains two elements, and is neither fixed nor extremal.
\end{enumerate}

Now consider a finite refinement $\mathcal{A}'\subset\mathcal{F}$ of $\mathcal{A}$. Without loss of generality, we suppose that $\mathcal{A}'$ is obtained from $\mathcal{A}$ by dividing one set into two. (This process can then be repeated to give any desired finite refinement.)  Clearly $\mathcal{N}'\supseteq \mathcal{N}$. We can then see the following.
\begin{enumerate}
\item If $\mathcal{B}$ is extremal it contains at least two sets. Either $\mathcal{B}'=\mathcal{B}$, or $\mathcal{B}'$ will contain at least one set which is the same as in $\mathcal{B}$. For this set $B_i$, there exist measures $\mathbb{P}, \mathbb{P}'\in\mathcal{P}$ such that $\mathbb{P}(B_i) = 0$ and $\mathbb{P}'(B_i)=1$, and therefore $\mathcal{B}'$ cannot be 1-atomic, 2-atomic or fixed, and so must be extremal.

\item If $\mathcal{B}$ is 2-atomic, then $\mathcal{B}'$ contains at least one set $B_i$ which is in $\mathcal{B}$, for which $\mathbb{P}(B_i)$ may vary in $\mathbb{P}\in\mathcal{P}$, but does not attain both 0 and 1. Hence, as $\mathcal{B}'$ is not 1-atomic, extremal or fixed, $\mathcal{B}'$ is 2-atomic. Note this implies that $\mathcal{B}=\{B_1,B_2\}$ differs from $\mathcal{B}'=\{B_1', B_2'\}$ only by the union with elements of $\mathcal{N}'$, and so $\mathbb{P}(B_j) = \mathbb{P}(B'_j)$, $j=1,2$, for all $\mathbb{P}\in\mathcal{P}$.

\item If $\mathcal{B}$ is fixed, then $\mathcal{B}'$ contains at least two elements and at least one of them is as in $\mathcal{B}$, and so has a fixed probability. Therefore $\mathcal{B}'$ is not 1-atomic, 2-atomic or extremal, and so is fixed.
\end{enumerate}
Therefore, if for a given partition $\mathcal{A}$, the collection of non-polar sets $\mathcal{B}$ is not 1-atomic, then its class (2-atomic, fixed or extremal) remains the same under finite refinement of the partition $\mathcal{A}$.

Now suppose that there exists a finite partition $\mathcal{C}\subset \mathcal{F}$ of $\Omega$ such that the induced $\mathcal{B}$ is not 1-atomic. Otherwise, let $\mathcal{C}=\Omega$, which is clearly 1-atomic. We suppose without loss of generality that all partitions we consider are refinements of $\mathcal{C}$, and hence all refinements will have the same class. We now consider each of the four classes separately.
\begin{enumerate}
\item Suppose $\mathcal{C}$ is 1-atomic. Then let $\mathcal{A}^n$ be a sequence of partitions such that, for all $A_i^n\in\mathcal{A}^n$, either $A_i^n$ is an $\mathcal{F}$-atom, or $\mathbb{P}^0(A^n_i)<2^{-n}$. As $\mathcal{B}^n=\{B^n_1\}$ is 1-atomic, we then see that $B^\infty:=\bigcap_n B^n_1$ is either a $\mathbb{P}^0$-nonnull $\mathcal{F}$-atom, or is a $\mathbb{P}^0$-null set. As $\mathbb{P}(B^\infty)=\lim\mathbb{P}(B^n)=1$ for all $\mathbb{P}\in\mathcal{P}$, $B^\infty$ must then be a $\mathbb{P}^0$-atom. Hence, for any $\omega_1\in B^\infty$, we have that $\rho(X) = -X(\omega_1)$, that is, $\rho$ is 1-atomic.

\item Next suppose $\mathcal{C}$ is 2-atomic. Take the same partition $\mathcal{A}^n$ as in the 1-atomic case, and we can again see that if $\mathcal{B}^n=\{B_1^n, B^n_2\}$, then $B_1^\infty := \bigcap_n B^n_1$ and $B_2^\infty := \bigcap_n B^n_2$ are $\mathbb{P}^0$-nonnull $\mathcal{F}$-atoms. As $\mathbb{P}(B_1^n)=1-\mathbb{P}(B_2^n)$ is independent of $n$, we can define $\alpha = \min_{\mathbb{P}\in\mathcal{P}} \mathbb{P}(B^n_1)>0$ and $\beta=\max_{\mathbb{P}\in\mathcal{P}} \mathbb{P}(B^n_1)<1$. Hence, for any $\omega_1\in B_1^\infty$, $\omega_2\in B_2^\infty$, $\rho(X) = \max_{p\in[\alpha,\beta]}\{-pX(\omega_1) - (1-p) X(\omega_2)\}$, that is, $\rho$ is 2-atomic.

\item Next suppose $\mathcal{C}$ is fixed. Then for any $A\in\mathcal{F}$, we can take a finite partition which is a refinement of both $\mathcal{C}$ and $\{A, A^c\}$, to see that $\mathbb{P}(A)$ is fixed for all $\mathbb{P}\in\mathcal{P}$. Hence $\mathcal{P}$ contains only a single element $\mathbb{P}^1$, and we see that $\rho(X)= -E_{\mathbb{P}^1}[X]$, that is, $\rho$ is linear.

\item Finally, suppose $\mathcal{C}$ is extremal. For any $X\in L^\infty(\Omega, \mathcal{F},\mathbb{P}^0)$, in the same way as we construct the Lebesgue integral, by the dominated convergence theorem we can construct a sequence $\mathcal{A}^n=\{A_i^n\}$ of finite refinements of $\mathcal{C}$ and a sequence of simple approximations of the form $X^n = \sum x_i^n I_{A_i^n}$ such that $X^n\downarrow X$ $\mathbb{P}^0$-almost surely. As $\mathcal{B}^n$ is extremal, \[\rho(X^n)=\max_i\{-X^n(\omega)|\omega\in B^n_i\}=\Pesssup\{-X^n(\omega)\}.\]

As $\rho$ satisfies the Fatou property and $X^n\downarrow X$ a.s., $\rho(X^n)\uparrow\rho(X)$, as shown in \cite{Delbaen2002}. It follows that
\[\begin{split}
\rho(X) &= \sup_n\{\Pesssup\{-X^n(\omega)\}\}\\
&=\Pesssup\{\sup_n\{-X^n(\omega)\}\}\\
&=\Pesssup\{-X(\omega)\},
\end{split}\]
that is, $\rho$ is extremal.
\end{enumerate}
\end{proof}

\begin{corollary}
If $(\Omega, \mathcal{F},\mathbb{P}^0)$ is non-atomic, then $\rho$ is either extremal or linear.
\end{corollary}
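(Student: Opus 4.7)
The proof strategy is to directly appeal to Theorem \ref{thm:mainthm}, which shows that $\rho$ must be one of four types: 1-atomic, 2-atomic, extremal, or linear. Under the non-atomicity hypothesis on $(\Omega, \mathcal{F}, \mathbb{P}^0)$, I would show that the first two types are precluded, leaving only the extremal and linear cases.

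For the 1-atomic case, recall from the proof of Theorem \ref{thm:mainthm} that the point $\omega_1$ is chosen inside the intersection $B^\infty = \bigcap_n B^n_1$, and this set is explicitly identified there as a $\mathbb{P}^0$-nonnull $\mathcal{F}$-atom. This is not an accident of the construction: the formula $\rho(X) = -X(\omega_1)$ must descend to the quotient space $L^\infty(\Omega,\mathcal{F},\mathbb{P}^0)$, so point evaluation at $\omega_1$ has to be well-defined modulo $\mathbb{P}^0$-null sets, which forces $\omega_1$ to lie in a $\mathbb{P}^0$-atom. Under non-atomicity, no such atom exists, so this case is vacuous. The 2-atomic case is ruled out identically: the proof of Theorem \ref{thm:mainthm} produces two $\mathbb{P}^0$-nonnull $\mathcal{F}$-atoms $B_1^\infty, B_2^\infty$ containing the distinguished points $\omega_1, \omega_2$, which is again impossible on a non-atomic space.

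There is no genuine obstacle to overcome, since all of the substantive work has been done in Theorem \ref{thm:mainthm}; the corollary is essentially a reading-off of the classification once the additional structural assumption on $\mathbb{P}^0$ is imposed. The one small thing worth double-checking is that the sets $B^\infty$ (and $B_1^\infty, B_2^\infty$) produced in the theorem's proof really are $\mathbb{P}^0$-atoms in the conventional sense, but this is immediate from their construction: by choosing the refining partitions $\mathcal{A}^n$ so that every non-atomic cell has $\mathbb{P}^0$-mass below $2^{-n}$, any measurable subset of $B^\infty$ of positive $\mathbb{P}^0$-measure must coincide with $B^\infty$ up to a $\mathbb{P}^0$-null set.
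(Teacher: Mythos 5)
Your proposal is correct and follows essentially the same route as the paper: both invoke Theorem \ref{thm:mainthm} and eliminate the 1-atomic and 2-atomic cases by observing that they require $\mathbb{P}^0$-nonnull atoms (equivalently, atomic measures in $\mathcal{P}$ absolutely continuous with respect to $\mathbb{P}^0$), which cannot exist when $(\Omega,\mathcal{F},\mathbb{P}^0)$ is non-atomic. The paper's version is a one-line restatement of the same observation, phrased via the atomicity of the measures in $\mathcal{P}$ rather than via the atoms $B^\infty$ produced in the theorem's proof.
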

\begin{proof}
The 1-atomic and 2-atomic cases correspond to atomic measures $\mathbb{P}\in\mathcal{P}$, but these measures must also be absolutely continuous with respect to $\mathbb{P}^0$, which is a contradiction. 
\end{proof}

\begin{corollary}
If $\rho$ is strictly monotone, that is, when $X\geq Y$ $\mathbb{P}^0$-a.s. and $\mathbb{P}^0(X>Y)>0$ implies $\rho(X)<\rho(Y)$, and if $\mathcal{F}$ does not consist of one or two atoms, then $\rho$ is linear.
\end{corollary}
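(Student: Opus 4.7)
The plan is to invoke Theorem~\ref{thm:mainthm}, which forces $\rho$ to take one of the four listed forms, and then to use strict monotonicity together with the hypothesis on $\mathcal{F}$ to eliminate the three non-linear possibilities. The hypothesis that $\mathcal{F}$ does not consist of one or two atoms ensures that whenever $\omega_1$ (respectively $\omega_1,\omega_2$) is a $\mathbb{P}^0$-atom of $\mathcal{F}$, the complement $\Omega\setminus\{\omega_1\}$ (respectively $\Omega\setminus\{\omega_1,\omega_2\}$) still has strictly positive $\mathbb{P}^0$-measure, and in particular there exists $A\in\mathcal{F}$ with $0<\mathbb{P}^0(A)<1$.

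In the 1-atomic and 2-atomic cases I would test strict monotonicity against $Y\equiv 0$ and $X=I_B$, with $B\in\mathcal{F}$ chosen disjoint from the distinguished atom(s) and with $\mathbb{P}^0(B)>0$. Direct substitution into the defining formula for $\rho$ gives $\rho(X)=0=\rho(Y)$, whereas $\mathbb{P}^0(X>Y)=\mathbb{P}^0(B)>0$, contradicting strict monotonicity.

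In the extremal case I would pick $A\in\mathcal{F}$ with $0<\mathbb{P}^0(A)<1$ and test against $X=I_A$, $Y\equiv 0$. Since $-I_A$ only takes the values $0$ and $-1$, one has $\Pesssup(-I_A)\in\{0,-1\}$; unpacking the definition shows that the value is $-1$ exactly when $A^c$ is polar, i.e.\ when $\mathbb{P}(A)=1$ for every $\mathbb{P}\in\mathcal{P}$. Strict monotonicity then forces $\mathbb{P}(A)=1$ for all $\mathbb{P}\in\mathcal{P}$; repeating the argument with $I_{A^c}$ in place of $I_A$ forces $\mathbb{P}(A^c)=1$ as well, a contradiction. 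The only remaining possibility is the linear one.

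The main obstacle, to the extent there is one, is the extremal case, since there one has to unpack the definition of $\Pesssup$ to see that on an indicator function $\rho$ depends only on whether the complement of the support is polar, and then play $A$ off against $A^c$. The 1-atomic and 2-atomic cases are essentially immediate, since functionals supported on finitely many atoms ignore all mass elsewhere in $\Omega$, which the hypothesis on $\mathcal{F}$ guarantees exists.
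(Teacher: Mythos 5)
Your proof is correct and follows the same route as the paper: invoke Theorem~\ref{thm:mainthm} and eliminate the 1-atomic, 2-atomic and extremal cases by testing strict monotonicity on indicator functions, using the hypothesis on $\mathcal{F}$ to produce a nonnull set missed by the distinguished atoms. The paper merely asserts these exclusions are ``clear''; your verification (in particular that $\Pesssup(-I_A)=-1$ exactly when $A^c$ is polar, so $A$ and $A^c$ cannot both satisfy the strict inequality) supplies precisely the omitted details.
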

\begin{proof}
As $\mathcal{F}$ does not consist of one or two atoms, it is clear that the 1-atomic, 2-atomic and extremal cases are not strictly monotone.
\end{proof}

\begin{remark}
If $\mathbb{P}^0\in\mathcal{P}$, then the $\Pesssup$ and $\mathbb{P}^0\text{-ess\,sup}$ agree, and hence an extremal $\rho$ is simply the classical essential supremum.
\end{remark}

Finally, for a general filtration $\{\mathcal{F}_t\}$ of $(\Omega, \mathcal{F})$, we now give a construction of $\{\mathcal{F}_t\}$-consistent extensions of these risk measures in each case. Verification that these are indeed $\{\mathcal{F}_t\}$-consistent extensions of the risk measures is straightforward.
\begin{proposition}
Suppose $\rho$ is 1-atomic. Then let $A$ be the $\mathcal{F}$-atom containing $\omega_1$. Let $B_t=\{\omega: E_{\mathbb{P}^0}[I_A|\mathcal{F}_t]>0\}=\bigcap \left(B\in\mathcal{F}_t: \mathbb{P}^0(B\cap A)>0\right)\in\mathcal{F}_t$.
Then an $\{\mathcal{F}_t\}$-consistent extension of $\rho$ is given by:
\[\rho(X) = \begin{cases} -X(\omega_1) & \omega\in B_t\\ -E_{\mathbb{P}^0}[X|\mathcal{F}_t] & \omega\notin B_t\end{cases}\]
\end{proposition}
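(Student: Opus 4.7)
The plan is to verify, in order, the three defining requirements of an $\{\mathcal{F}_t\}$-consistent extension: $\mathcal{F}_t$-measurability of $\rho_t(X)\in L^\infty(\mathcal{F}_t)$, the initial condition $\rho_0\equiv\rho$, and the semigroup identity $\rho_s(-\rho_t(X))=\rho_s(X)$ for $s\le t$. Measurability and boundedness of $-X(\omega_1)I_{B_t}-E_{\mathbb{P}^0}[X|\mathcal{F}_t]I_{B_t^c}$ are immediate since $B_t\in\mathcal{F}_t$ and $X\in L^\infty$, while triviality of $\mathcal{F}_0$ forces $B_0=\Omega$, so that $\rho_0(X)=-X(\omega_1)=\rho(X)$.

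Before handling the semigroup, I will establish two inclusions modulo $\mathbb{P}^0$-null sets: $A\subseteq B_t$ and $B_t\subseteq B_s$ for $s\le t$. The first follows from $\mathbb{P}^0(A\cap B_t^c)=E[E[I_A|\mathcal{F}_t]I_{B_t^c}]=0$, using that $E[I_A|\mathcal{F}_t]=0$ on $B_t^c$ by definition. For the second, $B_s\in\mathcal{F}_s\subseteq\mathcal{F}_t$ with $A\subseteq B_s$, so $B_s$ belongs to the class whose intersection defines $B_t$, forcing $B_t\subseteq B_s$.

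With these in hand the semigroup identity splits into two cases. On $B_s$, using $\omega_1\in A\subseteq B_t$, the left side evaluates to $-(-\rho_t(X))(\omega_1)=\rho_t(X)(\omega_1)=-X(\omega_1)$, which matches $\rho_s(X)=-X(\omega_1)$. On $B_s^c$, I decompose $\rho_t(X)=-X(\omega_1)I_{B_t}-E_{\mathbb{P}^0}[X|\mathcal{F}_t]I_{B_t^c}$ and apply $E_{\mathbb{P}^0}[\,\cdot\,|\mathcal{F}_s]$; the tower property with $B_t^c\in\mathcal{F}_t$ yields $E[\rho_t(X)|\mathcal{F}_s]=-X(\omega_1)E[I_{B_t}|\mathcal{F}_s]-E[XI_{B_t^c}|\mathcal{F}_s]$. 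Since $B_t\subseteq B_s$ and $B_s^c\in\mathcal{F}_s$, multiplication by $I_{B_s^c}$ annihilates $I_{B_t}$, so the first term vanishes on $B_s^c$; for the same reason $E[XI_{B_t}|\mathcal{F}_s]=0$ there, so the surviving term equals $-E[X|\mathcal{F}_s]=\rho_s(X)$ on $B_s^c$.

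The only genuinely substantive ingredient is the nesting $B_t\subseteq B_s$, which is precisely what prevents the frozen value $-X(\omega_1)$ on $B_t$ from leaking into the conditional expectation on $B_s^c$ and breaking time consistency. Every other step is a routine conditional-expectation manipulation, so I expect no further obstacle.
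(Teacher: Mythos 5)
Your proof is correct. Note that the paper offers no argument here at all --- it simply declares that ``verification \ldots is straightforward'' --- so there is no authorial proof to compare against; your write-up supplies exactly the omitted verification, and the one genuinely non-trivial ingredient you isolate, the nesting $B_t\subseteq B_s$ (mod $\mathbb{P}^0$-null sets) for $s\le t$ together with $A\subseteq B_t$, is indeed what makes the semigroup identity close. Two small points worth making explicit: the claim $B_0=\Omega$ uses not only triviality of $\mathcal{F}_0$ but also $\mathbb{P}^0(A)>0$, which holds because in the 1-atomic case of Theorem \ref{thm:mainthm} the atom $B^\infty=A$ is $\mathbb{P}^0$-nonnull; and the evaluation $(-\rho_t(X))(\omega_1)=X(\omega_1)$ relies on $A\subseteq B_t$ only up to a null set, so it should be read (as the paper itself implicitly does for $\rho(X)=-X(\omega_1)$) as taking the a.s.-constant value on the atom $A$ rather than a literal pointwise evaluation. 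Neither point affects the validity of the argument.
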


\begin{proposition}
Suppose $\rho$ is 2-atomic. Then let $A_1$, $A_2$ be the $\mathcal{F}$-atoms containing $\omega_1$ and $\omega_2$. Let
\[\begin{split}
B_t^1&=\{\omega: E_{\mathbb{P}^0}[I_{A^1}|\mathcal{F}_t]>0\}\\
B_t^2&=\{\omega: E_{\mathbb{P}^0}[I_{A^2}|\mathcal{F}_t]>0\}\\
B_t^3&=(B^1_t\cup B^2_t)^c.
\end{split}\]
If there exists a set $B\in\mathcal{F}_t$ such that $A^1\subseteq B$, $A^2\not\subseteq B$, (i.e. we can distinguish between $\omega_1$ and $\omega_2$ in $\mathcal{F}_t$), then let
\[\rho_t(X) = \begin{cases} 
-X(\omega_1) & \omega\in B_t^1\\
-X(\omega_2) & \omega\in B_t^2\\
-E_{\mathbb{P}^0}[X|\mathcal{F}_t] & \omega\in B_t^3.
\end{cases}\]
Otherwise, let 
\[\rho_t(X) = \begin{cases} 
\rho(X) & \omega\in B_t^1=B_t^2\\
-E_{\mathbb{P}^0}[X|\mathcal{F}_t] & \omega\in B_t^3,
\end{cases}\]
and $\rho_t$ is an $\{\mathcal{F}_t\}$-consistent extension of $\rho$.
\end{proposition}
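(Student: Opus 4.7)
The plan is to verify three properties showing that $\{\rho_t\}$ is an $\{\mathcal{F}_t\}$-consistent extension of $\rho$ in the sense of the introduction: (a) each $\rho_t$ maps $L^\infty(\mathcal{F})$ into $L^\infty(\mathcal{F}_t)$, (b) $\rho_0 = \rho$, and (c) the semigroup identity $\rho_s(-\rho_t(X)) = \rho_s(X)$ holds for all $0 \leq s \leq t$.

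Properties (a) and (b) are routine. For (a), the sets $B_t^1, B_t^2, B_t^3$ are $\mathcal{F}_t$-measurable by construction (defined via $\mathcal{F}_t$-conditional expectations), they partition $\Omega$ modulo $\mathbb{P}^0$-null sets (in the ``if'' branch $B_t^1 \cap B_t^2 = \emptyset$ since $B_t^1 \subseteq B$ and $B_t^2 \subseteq B^c$ up to null sets, while in the ``otherwise'' branch $B_t^1 = B_t^2$), and on each region the defining expression is $\mathcal{F}_t$-measurable and bounded. For (b), $\mathcal{F}_0$ is trivial so no $B \in \mathcal{F}_0$ can separate the atoms; we fall into the ``otherwise'' branch with $B_0^1 = B_0^2 = \Omega$, giving $\rho_0 = \rho$.

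The substantive work is (c). A standard argument characterises $B_t^i$ as the smallest $\mathcal{F}_t$-measurable set containing $A^i$ up to null sets, from which one extracts three monotonicity facts: $B_t^i \subseteq B_s^i$ and $B_s^3 \subseteq B_t^3$ for $s \leq t$, and ``if $\mathcal{F}_s$ distinguishes the atoms, so does $\mathcal{F}_t$''. I would then verify the semigroup identity separately on each $\mathcal{F}_s$-region. On $B_s^3$: since $B_s^3 \subseteq B_t^3$ and $B_s^3 \in \mathcal{F}_s$, the random variable $-\rho_t(X)$ coincides with $E_{\mathbb{P}^0}[X|\mathcal{F}_t]$ on $B_s^3$, and the tower property gives $\rho_s(-\rho_t(X)) = -E_{\mathbb{P}^0}[X|\mathcal{F}_s] = \rho_s(X)$ there. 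On $B_s^1$ (and symmetrically $B_s^2$): since $\omega_1 \in A^1 \subseteq B_t^1$, the evaluation $(-\rho_t(X))(\omega_1)$ depends only on the branch of $\rho_t$ --- it equals $X(\omega_1)$ in the ``if'' branch and $-\rho(X)$ in the ``otherwise'' branch --- and similarly for $\omega_2$. In each of the three feasible branch combinations for $(s,t)$ (otherwise-otherwise, otherwise-if, if-if), substituting these values into the formula for $\rho_s$ and simplifying collapses back to $\rho_s(X)$.

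The main obstacle is the mixed regime $s < \tau \leq t$, where $\tau$ is the first time at which the atoms become distinguishable: here $\rho_t(X)$ is a genuine piecewise object mixing point-masses on $B_t^1, B_t^2$ with a conditional expectation on $B_t^3$, yet $\rho_s$ is still the full $2$-atomic $\rho$ on $B_s^1 = B_s^2$. The reconciliation is the observation above --- $\rho_s$ probes its argument only at $\omega_1$ and $\omega_2$, both of which lie in the point-mass regions $B_t^1, B_t^2$ of $\rho_t$'s domain, so the conditional-expectation piece of $\rho_t(X)$ on $B_t^3$ is invisible to the outer $\rho_s$, and the outer $2$-atomic formula correctly rebuilds $\rho(X)$.
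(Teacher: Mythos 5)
The paper itself gives no proof of this proposition---it only remarks that ``verification \dots is straightforward''---so there is nothing to diverge from; your case-by-case check of the semigroup identity over the three feasible branch combinations, keyed on the monotonicity $B_t^i\subseteq B_s^i$ and the fact that the outer $\rho_s$ probes $-\rho_t(X)$ only at $\omega_1,\omega_2\in B_t^1\cup B_t^2$, is exactly the verification the author is leaving to the reader, and it is correct (including the disjointness of $B_t^1,B_t^2$ in the distinguishable case and the tower-property step on $B_s^3$).
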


\begin{proposition}
Suppose $\rho$ is linear, then an $\{\mathcal{F}_t\}$-consistent extension of $\rho$ is given by $\rho_t(X)=-E_{\mathbb{P}^1}[X|\mathcal{F}_t]$, with appropriate modification on any $\mathbb{P}^1$-null sets.
\end{proposition}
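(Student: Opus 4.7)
The plan is to verify the three defining properties of a time-consistent extension: initial matching with $\rho$, output in $L^\infty(\mathcal{F}_t)$, and the semigroup property $\rho_s(-\rho_t(X))=\rho_s(X)$ for $s\le t$. Once existence of such an extension is not in doubt (the singleton $\mathcal{Z}=\{d\mathbb{P}^1/d\mathbb{P}^0\}$ trivially satisfies the pasting property of Theorem~2.1 for every filtration), the work is purely computational.

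First I would make the ``appropriate modification'' precise. Let $Z=d\mathbb{P}^1/d\mathbb{P}^0$ and set $N_t=\{E_{\mathbb{P}^0}[Z|\mathcal{F}_t]=0\}\in\mathcal{F}_t$. Note $\mathbb{P}^1(N_t)=E_{\mathbb{P}^0}[Z\,I_{N_t}]=E_{\mathbb{P}^0}[I_{N_t}E_{\mathbb{P}^0}[Z|\mathcal{F}_t]]=0$, so $N_t$ is $\mathbb{P}^1$-null. Define
\[
\rho_t(X)(\omega)=\begin{cases}-\,\dfrac{E_{\mathbb{P}^0}[XZ|\mathcal{F}_t](\omega)}{E_{\mathbb{P}^0}[Z|\mathcal{F}_t](\omega)} & \omega\notin N_t,\\[4pt] 0 & \omega\in N_t.\end{cases}
\]
On $N_t^c$ this agrees $\mathbb{P}^1$-a.s.\ with $-E_{\mathbb{P}^1}[X|\mathcal{F}_t]$ by the standard Bayes/Radon--Nikodym formula, so $\rho_t(X)$ is $\mathcal{F}_t$-measurable and bounded by $\|X\|_\infty$, hence lies in $L^\infty(\mathcal{F}_t)$.

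Next I would verify the initial condition. Since $\mathcal{F}_0$ is trivial, $E_{\mathbb{P}^0}[Z|\mathcal{F}_0]=E_{\mathbb{P}^0}[Z]=1>0$, so $N_0=\emptyset$ and $\rho_0(X)=-E_{\mathbb{P}^0}[XZ]=-E_{\mathbb{P}^1}[X]=\rho(X)$. For the semigroup property at $s\le t$, I would use that $N_s\subseteq N_t$ (since $E_{\mathbb{P}^0}[Z|\mathcal{F}_s]=E_{\mathbb{P}^0}[E_{\mathbb{P}^0}[Z|\mathcal{F}_t]|\mathcal{F}_s]$ vanishes on $N_s$) together with the tower property: on $N_s^c$,
\[
\rho_s(-\rho_t(X))=\frac{E_{\mathbb{P}^0}[\rho_t(X)\,Z|\mathcal{F}_s]}{E_{\mathbb{P}^0}[Z|\mathcal{F}_s]}=\frac{E_{\mathbb{P}^0}[-E_{\mathbb{P}^0}[XZ|\mathcal{F}_t]|\mathcal{F}_s]}{E_{\mathbb{P}^0}[Z|\mathcal{F}_s]}=-\frac{E_{\mathbb{P}^0}[XZ|\mathcal{F}_s]}{E_{\mathbb{P}^0}[Z|\mathcal{F}_s]}=\rho_s(X),
\]
where the second equality uses that $\rho_t(X)E_{\mathbb{P}^0}[Z|\mathcal{F}_t]=-E_{\mathbb{P}^0}[XZ|\mathcal{F}_t]$ everywhere (both sides vanish on $N_t$). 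Both sides equal $0$ on $N_s$ by definition, so the identity holds $\mathbb{P}^0$-a.s.

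The only real subtlety, and the step I expect to require the most care, is the bookkeeping around the $\mathbb{P}^1$-null (but possibly $\mathbb{P}^0$-nonnull) set $N_t$: one must check that the arbitrary choice made there is $\mathcal{F}_t$-measurable, compatible with the nesting $N_s\subseteq N_t$, and does not corrupt either the tower computation or the $L^\infty$ bound. Everything else is a direct invocation of the Bayes rule and the tower property of $\mathbb{P}^0$-conditional expectation.
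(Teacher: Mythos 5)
The paper offers no proof of this proposition --- it simply asserts that ``verification \ldots is straightforward'' --- and your argument is exactly that straightforward verification, carried out correctly: the Bayes formula off the set $N_t=\{E_{\mathbb{P}^0}[Z|\mathcal{F}_t]=0\}$, the observation that $\rho_t(X)E_{\mathbb{P}^0}[Z|\mathcal{F}_t]=-E_{\mathbb{P}^0}[XZ|\mathcal{F}_t]$ holds everywhere, and the tower property then give the semigroup identity, while the nesting $N_s\subseteq N_t$ handles the null-set bookkeeping that the phrase ``appropriate modification'' glosses over. Your opening remark that the singleton $\mathcal{Z}$ trivially satisfies the pasting property of Theorem~2.1 already gives existence for every filtration, so the rest of your computation is the (correct) explicit identification of that extension.
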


\begin{proposition}
Suppose $\rho$ is extremal and $\mathbb{P}^0\in\mathcal{P}$. As shown in \cite{Barron2003}, we can define the $\mathcal{F}_t$-conditional essential supremum, denoted $\mathbb{P}^0\text{-ess\,sup}(\,\cdot\,|{\mathcal{F}_t})$. Then an $\{\mathcal{F}_t\}$-consistent extension of $\rho$ is given by
\[\rho_t(X)=\mathbb{P}^0\text{-ess\,sup}(-X|{\mathcal{F}_t}).\]
\end{proposition}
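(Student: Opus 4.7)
The plan is to verify the three properties characterising an $\{\mathcal{F}_t\}$-consistent extension: that $\rho_t$ maps $L^\infty(\mathcal{F})$ into $L^\infty(\mathcal{F}_t)$, that $\rho_0 \equiv \rho$, and that the semigroup identity $\rho_s(-\rho_t(X)) = \rho_s(X)$ holds for $0 \le s \le t$. Coherence of each $\rho_t$ (monotonicity, translation by $\mathcal{F}_t$-measurable constants, positive homogeneity, subadditivity) then follows at once from the corresponding $\mathbb{P}^0$-a.s.\ properties of the conditional essential supremum recorded in \cite{Barron2003}, since each axiom can be read off from the defining property of $\mathbb{P}^0\text{-ess\,sup}(\,\cdot\,|\mathcal{F}_t)$ as the smallest $\mathcal{F}_t$-measurable $\mathbb{P}^0$-a.s.\ dominant.

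I would dispose of the first two items immediately. Measurability and boundedness of $\rho_t(X)$ are built into the Barron--Cardaliaguet--Jensen construction: $\mathbb{P}^0\text{-ess\,sup}(-X|\mathcal{F}_t)$ is by construction $\mathcal{F}_t$-measurable and $\mathbb{P}^0$-a.s.\ dominated by $\|X\|_\infty$. For the initial condition, triviality of $\mathcal{F}_0$ reduces $\rho_0(X)$ to the ordinary $\mathbb{P}^0$-essential supremum of $-X$, and by the remark immediately preceding the proposition, the hypothesis $\mathbb{P}^0 \in \mathcal{P}$ forces this to coincide with $\Pesssup(-X) = \rho(X)$.

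The semigroup property is the only substantive step, and it reduces to the tower identity for conditional essential suprema. For $0 \le s \le t$, unfolding the definitions gives
\[
\rho_s(-\rho_t(X)) = \mathbb{P}^0\text{-ess\,sup}\bigl(\rho_t(X) \,\big|\, \mathcal{F}_s\bigr) = \mathbb{P}^0\text{-ess\,sup}\bigl(\mathbb{P}^0\text{-ess\,sup}(-X|\mathcal{F}_t) \,\big|\, \mathcal{F}_s\bigr),
\]
so what must be shown is the tower identity
\[
\mathbb{P}^0\text{-ess\,sup}\bigl(\mathbb{P}^0\text{-ess\,sup}(Y|\mathcal{F}_t) \,\big|\, \mathcal{F}_s\bigr) = \mathbb{P}^0\text{-ess\,sup}(Y \,|\, \mathcal{F}_s)
\]
applied to $Y = -X$. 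The inequality ``$\le$'' is immediate from $Y \le \mathbb{P}^0\text{-ess\,sup}(Y|\mathcal{F}_t)$ $\mathbb{P}^0$-a.s.\ together with monotonicity of the outer operator. For the reverse inequality I would invoke the minimality characterisation: any $\mathcal{F}_s$-measurable a.s.\ dominant of $Y$ is in particular $\mathcal{F}_t$-measurable, hence also dominates the inner conditional essential supremum, so the outer conditional essential supremum cannot exceed $\mathbb{P}^0\text{-ess\,sup}(Y|\mathcal{F}_s)$. This iterated-conditioning identity is precisely the tower property for the conditional essential supremum proved in \cite{Barron2003}, so the sole potential obstacle is already supplied by the cited result, and the remainder of the verification is routine bookkeeping.
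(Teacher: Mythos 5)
Your proof is correct and follows exactly the route the paper intends: the paper offers no written proof (it declares the verification of all four extension propositions ``straightforward''), and your argument supplies precisely the expected details --- the initial condition $\rho_0\equiv\rho$ via the remark that $\Pesssup$ coincides with the $\mathbb{P}^0$-essential supremum when $\mathbb{P}^0\in\mathcal{P}$, and the semigroup identity via the tower property of the conditional essential supremum, which is indeed established in \cite{Barron2003} by the minimality characterisation you invoke. Nothing is missing.
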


\section{Conclusion}
We have shown that a coherent risk measure with the Fatou property is time consistent for all (simple) filtrations of a probability space if and only if it is either 1-atomic, 2-atomic, linear or extremal. This result demonstrates the restrictive nature of assuming time-consistency for all filtrations. This result holds whether the filtrations are assumed to be in discrete or continuous time. 

For the more general convex risk measures, the result of Theorem \ref{thm:mainthm} does not hold. For example, $\rho(X) = \gamma \ln E[\exp(-X/\gamma)]$, $\gamma\geq 0$ has a time consistent extension $\rho_t(X)= \gamma \ln E[\exp(-X/\gamma)|\mathcal{F}_t]$. Our result also does not consider the case where there is no reference probability measure $\mathbb{P}^0$. Further work is needed to determine the corresponding result in these cases.

\bibliographystyle{plain}  
\bibliography{../RiskPapers/General}
\end{document}